\newtheorem{theorem}{Theorem}
\newtheorem{lemma}{Lemma}
\newtheorem{remark}{Remark}
\newtheorem{corollary}{Corollary}
\newtheorem{example}{Example}
\newcommand{\beq}{\begin{equation}}
\newcommand{\eeq}{\end{equation}}
\newcommand{\beqnn}{\begin{equation*}}
\newcommand{\eeqnn}{\end{equation*}}
\newcommand{\beqy}{\begin{eqnarray}}
\newcommand{\eeqy}{\end{eqnarray}}
\newcommand{\beqynn}{\begin{eqnarray*}}
\newcommand{\eeqynn}{\end{eqnarray*}}
\newcommand{\bit}{\begin{itemize}}
\newcommand{\eit}{\end{itemize}}
\newcommand{\ben}{\begin{enumerate}}
\newcommand{\een}{\end{enumerate}}
\newcommand{\bex}{\begin{example}}
\newcommand{\eex}{\end{example}}
\newcommand{\balg}[1]{\begin{algorithm} \caption{#1}}
\newcommand{\ealg}{\end{algorithm}}
\newcommand{\balgc}{\begin{algorithmic}[1]}
\newcommand{\ealgc}{\end{algorithmic}}
\newcommand{\bary}{\begin{array}}
\newcommand{\eary}{\end{array}}
\newcommand{\bmx}{\begin{bmatrix}}
\newcommand{\emx}{\end{bmatrix}}
\newcommand{\bsmx}{\left[\begin{smallmatrix}}
\newcommand{\esmx}{\end{smallmatrix}\right]}
\newcommand{\bmxc}[1]{\left[\begin{array}{@{}#1@{}}}
\newcommand{\emxc}{\end{array}\right]}
\newcommand{\bcn}{\begin{center}}
\newcommand{\ecn}{\end{center}}
\newcommand{\A}{\boldsymbol{A}}
\newcommand{\B}{\boldsymbol{B}}
\newcommand{\D}{\boldsymbol{D}}
\newcommand{\I}{\boldsymbol{I}}
\renewcommand{\P}{\boldsymbol{P}}
\renewcommand{\S}{\boldsymbol{S}}
\newcommand{\U}{\boldsymbol{U}}
\newcommand{\e}{\boldsymbol{e}}
\newcommand{\rr}{\boldsymbol{r}}
\renewcommand{\u}{\boldsymbol{u}}
\renewcommand{\v}{\boldsymbol{v}}
\newcommand{\w}{\boldsymbol{w}}
\newcommand{\x}{{\boldsymbol{x}}}
\newcommand{\y}{{\boldsymbol{y}}}
\newcommand{\0}{{\boldsymbol{0}}}
\begin{document}
%
\title{A Novel Sufficient Condition for  Generalized Orthogonal Matching Pursuit}

\author{Jinming~Wen, Zhengchun Zhou, Dongfang Li and  Xiaohu Tang
\thanks{This research  was supported by  ``Programme Avenir
Lyon Saint-Etienne de l'Universit\'e de Lyon" in the framework of the programme
``Inverstissements d'Avenir" (ANR-11-IDEX-0007),  ANR through the HPAC project under Grant ANR~11~BS02~013, and  NSFC (Nos. 61661146003, 61672028).}
\thanks{J.~Wen is with  Department of Electrical and Computer Engineering, University of Alberta, Edmonton T6G 2V4 (e-mail: jwen@math.mcgill.ca).}
\thanks{Z. Zhou is with the School of Mathematics, Southwest Jiaotong University,
Chengdu 610031, China (e-mail: zzc@home.swjtu.edu.cn).}
\thanks{D. Li is with the School of Mathematics and Statistics,
Huazhong University of Science and Technology, Wuhan 430074, China (e-mail: dfli@hust.edu.cn).}
\thanks{X. Tang is with the Information Security and National Computing Grid
Laboratory, Southwest Jiaotong University, Chengdu 610031, China (e-mail: xhutang@swjtu.edu.cn).}

}

\maketitle

\begin{abstract}
Generalized orthogonal matching pursuit (gOMP),
also called orthogonal multi-matching pursuit,
is an extension of OMP in the sense that $N\geq1$ indices are identified per iteration.
In this paper, we show that if the restricted isometry constant (RIC) $\delta_{NK+1}$ of a sensing matrix $\A$ satisfies $\delta_{NK+1} < 1/\sqrt {K/N+1}$,
then under a condition on the signal-to-noise ratio,
gOMP identifies at least one index in the support of any $K$-sparse signal $\x$
from $\y=\A\x+\v$ at each iteration, where $\v$ is a noise vector.
Surprisingly, this condition does not require $N\leq K$ which is needed in Wang, \textit{et al} 2012 and Liu, \textit{et al} 2012. Thus, $N$ can have more choices.
When $N=1$, it reduces to be a sufficient condition for OMP,
which is less restrictive than that proposed in Wang 2015.
Moreover, in the noise-free case, it is a sufficient condition
for accurately recovering $\x$ in $K$ iterations which is less restrictive than the best known one.
In particular, it reduces to the sharp condition proposed in Mo 2015 when $N=1$.
\end{abstract}


\begin{IEEEkeywords}
Compressed sensing, restricted isometry constant, generalized orthogonal matching pursuit, support recovery.
\end{IEEEkeywords}

\section{Introduction}\label{introduction}
One of the central aims of compressed sensing  is to recover a $K$-sparse unknown signal
$\x\in \mathbb{R}^n$ (i.e., $\x$ has at most $K$ nonzero entries)
from the following linear model \cite{CanT05} \cite{Don06}
\beq
\label{e:model}
\y=\A\x+\v,
\eeq
where $\y\in \mathbb{R}^m$ is an observation vector,
$\A\in \mathbb{R}^{m\times n}$ (with $m<<n$) is a given sensing matrix and $\v \in \mathbb{R}^{m}$ is a noise vector.

It has been shown that (see, e.g., \cite{CanT05, Don06, TroG07, WenLZ15}) stably recovering $\x$
by some sparse recovery algorithms is possible under certain conditions on $\A$.
One of the widely used frameworks for characterizing such conditions is the restricted isometry property (RIP) \cite{CanT05}.
For a sensing matrix $\A$ and for any integer $K$, the restricted isometry constant (RIC) $\delta_K$ of order $K$ is defined as the smallest constant such that
\begin{equation}
\label{e:RIP}
(1-\delta_K)\|\x\|_2^2\leq \|\A\x\|_2^2\leq(1+\delta_K)\|\x\|_2^2
\end{equation}
for all $K$-sparse vectors $\x$.

One of the most popular sparse recovery algorithms is orthogonal matching pursuit (OMP) \cite{TroG07}.
Generalized orthogonal matching pursuit (gOMP) \cite{WanKS12},
also called orthogonal multi-matching pursuit \cite{LiuT12},
is an extension of OMP in the sense that $N (N\geq1)$ indices are identified per iteration.
Simulations in \cite{WanKS12} and \cite{LiuT12} indicate  that, compared with OMP,
gOMP has better sparse recovery performance.
The gOMP algorithm is described in Algorithm \ref{a:gOMP}, where
$\A_S$ denotes the submatrix of $\A$ that contains only the columns indexed by set $S\subset\{1,2,\ldots ,n\}$,
$\x_S$ denotes the subvector of $\x$ that contains only the entries indexed by $S$.
Note that when $N=1$, gOMP reduces to OMP.

\begin{algorithm}[h!]
\caption{gOMP}  \label{a:gOMP}
Input: $\y\in \mathbb{R}^m$, $\A\in \mathbb{R}^{m\times n}$, $K$,
$N\leq (m-1)/K$ and $\epsilon>0$ .\\
Initialize: $k=0, \rr^0=\y, S_0=\emptyset$.

\begin{algorithmic}[1]
\WHILE{$k<K$ and $\|\rr^k\|_2>\epsilon$}
\STATE $k=k+1$
\STATE Choose indexes $i_1,\ldots, i_N$ corresponding to the $N$ largest magnitude of $\A^T\rr^{k-1}$,
\STATE $S_{k}=S_{k-1}\bigcup\{i_1,\ldots, i_N\}$,
\STATE $\hat{\x}_{S_{k}}=\arg \min\limits_{\x\in \mathbb{R}^{|S_{k}|}}\|\y-\A_{S_{k}}\x\|_2$,
\STATE $\rr^{k}=\y-\A_{S_{k}}\hat{\x}_{S_{k}}$
\ENDWHILE
\end{algorithmic}
Output: $\hat{\x}=\arg \min\limits_{\x: \Omega=S_k}\|\y-\A\x\|_2$.
\end{algorithm}




Many RIC-based conditions have been proposed to guarantee the accurately recovery of $K$-sparse signals with gOMP
in the noise-free case (i.e., when $v=0$) for general $N$,
such as $\delta_{NK} < 1/(\sqrt {K/N}+3) $ \cite{WanKS12},
$\delta_{NK} < 1/\big((2+\sqrt{2})\sqrt {K/N}\big )$ \cite{LiuT12},
$\delta_{NK} < 1/(\sqrt {K/N}+2)$ and $\delta_{NK+1} < 1/(\sqrt {K/N}+1)$  \cite{SatDC13}.
Recently, it was further improved to $\delta_{NK} < 1/(\sqrt {K/N}+1.27)$ \cite{SheLPL14}.
It is worthwhile pointing out that there are more sufficient conditions for OMP,
see, e.g., \cite{DavW10,WenZL13,Mo15}.

Sufficient conditions of the exact support recovery of $K$-sparse signals with gOMP in the noisy case  have also
been widely studied (see e.g., \cite{LiSRK15} \cite{LiSWL15}).
In particular, it was proved in \cite{LiSWL15} that under certain conditions on the minimum magnitude of the nonzero elements of $\x$,
$\delta_{NK+1} < 1/(\sqrt {K/N}+1)$ is a sufficient condition under both
$\ell_2$ and $\ell_{\infty}$ bounded noises
(i.e., $\|\v\|_2\leq \epsilon$ and $\|\A^T\v\|_{\infty}\leq \epsilon$ for some constant $\epsilon$, respectively) .


%


In this paper, we aim to investigate RIP based sufficient conditions for
the exact support recovery with gOMP in the noisy case.
Instead of considering the $\ell_2$ and $\ell_{\infty}$ bounded noises separately (see, e.g, \cite{LiSWL15}),
we follow \cite{Wan15} and use the
signal-to-noise ratio (SNR) and the minimum-to-average ratio (MAR), which are respectively defined by
\begin{align}
\label{e:SNR}
\mbox{SNR}&=
\begin{cases}
\frac{\|\A\x\|^2_2}{\|\v\|_2^2} & \v\neq\0 \\
+\infty & \v=\0
\end{cases}
\mbox{ and }
\mbox{MAR}&=\frac{\min_{i\in \Omega}|x_i|^2}{\|\x\|_2^2/K},
\end{align}
to measure $\v$ and $\x$.
The main reason that we use SNR is because it is a commonly used  measure
that compares the level of a desired signal to the level of background noise in science and engineering.
We show that under a condition on SNR and MAR,
gOMP is ensured to recover at least one index in the support of  $\x$ at each iteration if $\delta_{NK+1} < 1/\sqrt {K/N+1}$.
As consequences, we have:
\begin{itemize}
\item
Unlike \cite{WanKS12}  and \cite{LiuT12}, which require $N\leq \min(K,m/K)$,
our condition on $N$ is only $N\leq (m-1)/K$ which ensures that the assumption $\delta_{NK+1} < 1/\sqrt {K/N+1}$ makes sense.
This allows more choices of $N$ for gOMP.

\item
The exact support recovery condition for  gOMP reduces to that for OMP when $N=1$,
and it is weaker than that proposed in \cite{Wan15} in terms of both SNR and RIP.

\item
In the noise-free case, we obtain that $\delta_{NK+1} < 1/\sqrt {K/N+1}$ is a sufficient condition
for accurately recovering $K$-sparse signals with gOMP in $K$ iterations.
This improves the best known  condition $\delta_{NK+1} < 1/(\sqrt {K/N}+1)$ \cite{SatDC13}.
Moreover,  when $N=1$, it is a sharp condition according to  \cite{WenZL13} \cite{Mo15}.

%
\end{itemize}

The rest of the paper is organized as follows.
We give some useful notation and lemmas in section \ref{s:lemmas}.
We present our main results in Section \ref{s:main}, and do numerical tests to illustrate them in
Section \ref{s:num}.
Finally, this paper is summarized in Section \ref{s:con}.

\section{Notation and  Useful Lemmas}
\label{s:lemmas}

We introduce some notations and useful lemmas in this section.
\subsection{Notation}
Throughout this paper, we adopt the following notation unless otherwise stated.
Let $\mathbb{R}$ be the real field. Boldface lowercase letters denote column vectors, and boldface uppercase letters denote matrices.
e.g., $\x\in\mathbb{R}^n$ and $\A\in\mathbb{R}^{m\times n}$.
Let $\0$ denote a zero vector.
Let $\Omega$ be the support of $\x$ and $|\Omega|$ be the cardinality of $\Omega$.
Let set $S\subset \{1,2,\ldots ,n\}$, and $\Omega \setminus S=\{i|i\in\Omega, i\not\in S\}$.
Let $\Omega^c$ and $S^c$ be the complement of $\Omega$ and $S$,
i.e., $\Omega^c=\{1,2,\ldots ,n\}\setminus \Omega$, and $S^c=\{1,2,\ldots ,n\}\setminus S$.
Let $\A_S$ be the submatrix of $\A$ that only contains the columns indexed by $\S$,
and $\x_S$ be the subvector of $\x$ that only contains the entries indexed by $\S$,
and $\A_S^T$ be the transpose of $\A_S$.
For any full column rank matrix $\A_S$, let $\P_S=\A_S(\A_S^T\A_S)^{-1}\A_S^T$
and $\P^{\bot}_S=\I-\P_S$ denote the projector and the orthogonal complement projector
on the column space of $\A_S$, respectively.

\subsection{Useful lemmas}
We now introduce some lemmas that will be used in the sequel.
\begin{lemma}[\cite{CanT05}]
\label{l:monot}
If a matrix $\A\in\mathbb{R}^{m\times n}$  satisfies the RIP of orders $K_1$ and $K_2$ with $K_1<K_2$, then
$
\delta_{K_1}\leq \delta_{K_2}.
$
\end{lemma}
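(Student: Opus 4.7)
The plan is to exploit the fact that the class of $K_1$-sparse vectors is a strict subset of the class of $K_2$-sparse vectors whenever $K_1 < K_2$. First I would fix an arbitrary $K_1$-sparse vector $\x \in \mathbb{R}^n$ and observe that, since it has at most $K_1 < K_2$ nonzero entries, it automatically belongs to the set of $K_2$-sparse vectors. Hence the RIP inequality \eqref{e:RIP} with constant $\delta_{K_2}$ must hold for this $\x$, giving
\[
(1-\delta_{K_2})\|\x\|_2^2 \leq \|\A\x\|_2^2 \leq (1+\delta_{K_2})\|\x\|_2^2.
\]

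Next I would invoke the definitional characterization of $\delta_{K_1}$ as the \emph{smallest} constant for which such a two-sided bound holds uniformly over all $K_1$-sparse vectors. The inequality established in the previous step exhibits $\delta_{K_2}$ as one valid constant for the $K_1$-sparse class, so by minimality $\delta_{K_1} \leq \delta_{K_2}$, as required. There is no real obstacle in the argument; the only subtlety is to be explicit that it is the \emph{minimum} over admissible constants in the definition of the RIC that carries the monotonicity from the trivial set inclusion on the sparse vectors.
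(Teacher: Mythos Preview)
Your argument is correct and is exactly the standard proof: the set inclusion of $K_1$-sparse vectors into $K_2$-sparse vectors shows $\delta_{K_2}$ is an admissible constant for the $K_1$-sparse class, and minimality of $\delta_{K_1}$ finishes it. The paper does not give its own proof of this lemma---it simply cites it from \cite{CanT05}---so there is nothing further to compare.
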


\begin{lemma}[\cite{SheLPL14}]
\label{l:orthogonalcomp}
Let $S_1,S_2$ be two subsets of  $\{1,2,\ldots ,n\}$ with $|S_2\setminus S_1|\geq1$. If a matrix $\A\in\mathbb{R}^{m\times n}$  satisfies the RIP of order $|S_1\cup S_2|$, then
for any vector $\x \in \mathbb{R}^{|S_2\setminus S_1|}$,
\beqnn
(1-\delta_{|S_1\cup S_2|})\|\x\|_2^2\leq
\|\P^{\bot}_{S_1}\A_{S_2\setminus S_1}\x\|_2^2\leq(1+\delta_{|S_1\cup S_2|})\|\x\|_2^2.
\eeqnn
\end{lemma}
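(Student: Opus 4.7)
The plan is to prove the two inequalities separately, leaning on the fact that $\P^{\bot}_{S_1}$ is an orthogonal projector and that $\P^{\bot}_{S_1}\A_{S_2\setminus S_1}\x$ can be written as $\A$ applied to a vector supported on the disjoint union $S_1 \cup (S_2\setminus S_1) = S_1 \cup S_2$.

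For the upper bound, I would use the fact that an orthogonal projector has operator norm at most $1$, so
\beqnn
\|\P^{\bot}_{S_1}\A_{S_2\setminus S_1}\x\|_2^2 \leq \|\A_{S_2\setminus S_1}\x\|_2^2.
\eeqnn
Since the vector $\A_{S_2 \setminus S_1}\x$ coincides with $\A$ applied to the zero-extension of $\x$ to coordinates in $S_2 \setminus S_1$, and this extension is $|S_2\setminus S_1|$-sparse, the RIP gives $\|\A_{S_2\setminus S_1}\x\|_2^2 \leq (1+\delta_{|S_2\setminus S_1|})\|\x\|_2^2$. Lemma \ref{l:monot} and $|S_2\setminus S_1|\leq|S_1\cup S_2|$ then upgrade this to $(1+\delta_{|S_1\cup S_2|})\|\x\|_2^2$, yielding the desired upper bound.

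For the lower bound, I would introduce the least-squares coefficient vector
\beqnn
\w^\star = (\A_{S_1}^T\A_{S_1})^{-1}\A_{S_1}^T\A_{S_2\setminus S_1}\x,
\eeqnn
so that $\P^{\bot}_{S_1}\A_{S_2\setminus S_1}\x = \A_{S_2\setminus S_1}\x - \A_{S_1}\w^\star$. The key step is to observe that, because $S_1$ and $S_2\setminus S_1$ are disjoint with union $S_1\cup S_2$, this difference equals $\A\z$ for the vector $\z\in\mbbR^n$ whose restriction to $S_2\setminus S_1$ is $\x$, whose restriction to $S_1$ is $-\w^\star$, and which vanishes elsewhere. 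Such $\z$ is supported on $S_1\cup S_2$, so the RIP gives
\beqnn
\|\P^{\bot}_{S_1}\A_{S_2\setminus S_1}\x\|_2^2 = \|\A\z\|_2^2 \geq (1-\delta_{|S_1\cup S_2|})(\|\x\|_2^2+\|\w^\star\|_2^2) \geq (1-\delta_{|S_1\cup S_2|})\|\x\|_2^2,
\eeqnn
which is exactly the lower bound.

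The one point that deserves a bit of care is invertibility of $\A_{S_1}^T\A_{S_1}$ in the definition of $\w^\star$, but this is free: $|S_1|\leq|S_1\cup S_2|$ together with $\delta_{|S_1\cup S_2|}<1$ forces $\A_{S_1}$ to have full column rank by \eqref{e:RIP}. Aside from that, the argument is essentially a bookkeeping exercise, so I do not expect a genuine obstacle; the main conceptual step is simply recognizing the projected quantity as $\A\z$ for a vector supported on $S_1\cup S_2$, which is what allows the RIP of order $|S_1\cup S_2|$ to be applied on the "combined" vector rather than on $\x$ and $\w^\star$ separately.
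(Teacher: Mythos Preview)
Your argument is correct. Note, however, that the paper does not actually prove this lemma: it is quoted verbatim from \cite{SheLPL14} and used as a black box. Your proof is the standard one---express $\P^{\bot}_{S_1}\A_{S_2\setminus S_1}\x$ as $\A$ applied to a vector supported on $S_1\cup S_2$, then invoke the RIP on that combined support---and is essentially how the result is established in the cited reference as well.
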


\begin{lemma}[\cite{NeeT09}]
\label{l:AtRIP}
Let $\A$ satisfy the RIP of order $K$ and $S$ be a subset of  $\{1,2,\ldots ,n\}$  with $|S|\leq K$, then for any $\x \in \mathbb{R}^m$,
$
\|\A^T_S\x\|_2^2\leq(1+\delta_K)\|\x\|_2^2.
$
\end{lemma}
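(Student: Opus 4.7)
The plan is to reduce $\|\A_S^T \x\|_2^2$ to a quadratic form in $\x$ and then bound the spectral norm of the resulting Gram matrix via the RIP. Concretely, I would write $\|\A_S^T \x\|_2^2 = \x^T (\A_S \A_S^T) \x \leq \lambda_{\max}(\A_S \A_S^T)\,\|\x\|_2^2$ by the Rayleigh quotient bound, and observe that $\A_S \A_S^T \in \Rbb^{m \times m}$ and $\A_S^T \A_S \in \Rbb^{|S| \times |S|}$ share the same nonzero eigenvalues, so it suffices to show $\lambda_{\max}(\A_S^T \A_S) \leq 1 + \delta_K$.

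Next, I would invoke the RIP. For any $\z \in \Rbb^{|S|}$, I would embed $\z$ into $\Rn$ as the vector $\tilde{\z}$ whose restriction to $S$ equals $\z$ and whose entries outside $S$ vanish. Then $\tilde{\z}$ is $K$-sparse because $|S| \leq K$, and $\A \tilde{\z} = \A_S \z$. The definition of $\delta_K$ in \eqref{e:RIP} therefore gives
\beqnn
\z^T \A_S^T \A_S \z = \|\A \tilde{\z}\|_2^2 \leq (1+\delta_K)\|\tilde{\z}\|_2^2 = (1+\delta_K)\|\z\|_2^2,
\eeqnn
so $\lambda_{\max}(\A_S^T \A_S) \leq 1+\delta_K$. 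Chaining this with the Rayleigh inequality from the first step produces the claimed bound $\|\A_S^T \x\|_2^2 \leq (1+\delta_K)\|\x\|_2^2$.

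There is no genuine obstacle here; the lemma is essentially a transposition of the upper RIP bound from $\A_S$ to $\A_S^T$, made possible by the symmetry of the singular values of $\A_S$. The only step requiring any care is the zero-padding that lifts a vector on $S$ to a $K$-sparse vector in $\Rn$, which is needed because the RIP in \eqref{e:RIP} is stated for full-length $K$-sparse vectors rather than for submatrix-vector products directly.
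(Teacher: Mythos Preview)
Your argument is correct and is the standard route: bound $\|\A_S^T\x\|_2^2$ by $\lambda_{\max}(\A_S\A_S^T)\|\x\|_2^2$, use that $\A_S\A_S^T$ and $\A_S^T\A_S$ share nonzero eigenvalues, and read off $\lambda_{\max}(\A_S^T\A_S)\leq 1+\delta_K$ from the upper RIP inequality via zero-padding. The paper itself does not prove this lemma at all; it is quoted from \cite{NeeT09} as a known fact, so there is no ``paper's proof'' to compare against beyond noting that your derivation is exactly the kind of short spectral argument that underlies the cited result.
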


\section{Main results}
\label{s:main}

We propose our main results in this section. We begin with the following technical lemma.

\begin{lemma}
\label{l:main}
Let set $S\subseteq \{1,2,\ldots,n\}$ satisfy $|S|=kN$ and $|\Omega\cap S|=\ell$
for some integers $N$, $k$ and $\ell$ with $0\leq k\leq \ell\leq |\Omega|-1$ and $N(k+1)+|\Omega|-k\leq m$.
Let $W\subseteq \Omega^c$ satisfy $|W|=N$ and $W\cap S=\emptyset$.
If $\A$ in \eqref{e:model} satisfies the RIP of order $N(k+1)+|\Omega|-\ell$,
then
\begin{align}
\label{e:main}
&\max_{i\in \Omega\setminus S}|\A_{i}^T\P^{\bot}_S\A_{\Omega\setminus S}\x_{\Omega\setminus S}|
-\frac{1}{N}\sum_{j\in W}|\A_{j}^T\P^{\bot}_S\A_{\Omega\setminus S}\x_{\Omega\setminus S}| \nonumber \\
\geq &\frac{(1-\sqrt {(|\Omega|-\ell)/N+1 }\delta_{N(k+1)+|\Omega|-\ell})\|\x_{\Omega\setminus S}\|_2}{\sqrt{|\Omega|-\ell}}.
\end{align}
\end{lemma}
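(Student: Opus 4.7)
The plan is to avoid bounding the two terms in \eqref{e:main} independently (a naive triangle-inequality analysis would only yield the loose constant $1+\sqrt{(|\Omega|-\ell)/N}$) and instead treat them jointly as the two coordinate blocks of a single RIP-type near-isometry residual. Set $t := |\Omega|-\ell$, let $T := (\Omega\setminus S)\cup W$, which is a disjoint union of size $t+N$ and is disjoint from $S$, and put $\v := \P_S^{\perp}\A_{\Omega\setminus S}\x_{\Omega\setminus S}$. Let $\x_T$ denote the vector supported on $T$ that equals $\x_{\Omega\setminus S}$ on $\Omega\setminus S$ and $\0$ on $W$, so that $\v = \P_S^{\perp}\A_T\x_T$.

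First I would apply Lemma~\ref{l:orthogonalcomp} with $S_1=S$ and $S_2 = S\cup T$, noting that $|S_1\cup S_2| = |S|+|T| = N(k+1)+|\Omega|-\ell$, to conclude that $\P_S^{\perp}\A_T$ is a $\delta$-near-isometry with $\delta := \delta_{N(k+1)+|\Omega|-\ell}$. Consequently the symmetric matrix $\A_T^T\P_S^{\perp}\A_T = (\P_S^{\perp}\A_T)^T(\P_S^{\perp}\A_T)$ has spectrum contained in $[1-\delta,\,1+\delta]$, so $\|\A_T^T\P_S^{\perp}\A_T\,\y - \y\|_2 \leq \delta\|\y\|_2$ for every $\y$. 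Taking $\y = \x_T$ and using $\v = \P_S^{\perp}\A_T\x_T$ gives $\|\A_T^T\v - \x_T\|_2 \leq \delta\|\x_{\Omega\setminus S}\|_2$. Splitting this residual along the partition $T = (\Omega\setminus S)\cup W$, and writing $\alpha := \|\A_{\Omega\setminus S}^T\v - \x_{\Omega\setminus S}\|_2$ and $\beta := \|\A_W^T\v\|_2$, one obtains the joint bound $\alpha^2 + \beta^2 \leq \delta^2\|\x_{\Omega\setminus S}\|_2^2$.

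Next I would translate the two quantities in \eqref{e:main} into $\alpha$ and $\beta$ via elementary inequalities: the $\ell_\infty$/$\ell_2$ bound on a $t$-dimensional vector combined with the reverse triangle inequality gives $\max_{i\in\Omega\setminus S}|\A_i^T\v| \geq \|\A_{\Omega\setminus S}^T\v\|_2/\sqrt{t} \geq (\|\x_{\Omega\setminus S}\|_2 - \alpha)/\sqrt{t}$, while Cauchy--Schwarz on $W$ yields $\tfrac{1}{N}\sum_{j\in W}|\A_j^T\v| \leq \|\A_W^T\v\|_2/\sqrt{N} = \beta/\sqrt{N}$. Subtracting and multiplying through by $\sqrt{t}$, the desired bound \eqref{e:main} reduces to proving $\sqrt{N}\,\alpha + \sqrt{t}\,\beta \leq \sqrt{N+t}\,\delta\|\x_{\Omega\setminus S}\|_2$, which follows from a second Cauchy--Schwarz applied to the weight vector $(\sqrt{N},\sqrt{t})$ against $(\alpha,\beta)$ together with the joint bound of the previous paragraph.

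The main obstacle, and the only subtle step, is recognizing that this final weighted Cauchy--Schwarz is what converts the trivial combination $\sqrt{N}+\sqrt{t}$ (which one would get from handling $\alpha$ and $\beta$ separately with triangle inequalities) into the sharper factor $\sqrt{N+t}$, thereby producing the constant $\sqrt{(|\Omega|-\ell)/N+1}$ rather than $1+\sqrt{(|\Omega|-\ell)/N}$. Everything else is either a direct application of Lemma~\ref{l:orthogonalcomp} or a routine norm inequality.
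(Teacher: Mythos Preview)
Your argument is correct and reaches the same sharp constant $\sqrt{(|\Omega|-\ell)/N+1}$ as the paper, but by a genuinely different route. The paper's proof introduces a sign vector $\overline{\e}_W$ and a scalar parameter $\alpha$ chosen so that $(1+\alpha^2)/(1-\alpha^2)=\sqrt{(|\Omega|-\ell)/N+1}$, then expands a polarization-type difference $\|\B(\u+\w)\|_2^2-\|\B(\alpha^2\u-\w)\|_2^2$ and applies Lemma~\ref{l:orthogonalcomp} to each term separately; the sharp constant emerges from the algebraic identities satisfied by $\alpha$. You instead use Lemma~\ref{l:orthogonalcomp} once to deduce the spectral bound $\|\A_T^T\P_S^{\perp}\A_T-\I\|_2\le\delta$, which immediately gives the \emph{joint} control $\alpha^2+\beta^2\le\delta^2\|\x_{\Omega\setminus S}\|_2^2$ on the two residual blocks, and then a single weighted Cauchy--Schwarz $(\sqrt{N},\sqrt{t})\cdot(\alpha,\beta)\le\sqrt{N+t}\sqrt{\alpha^2+\beta^2}$ delivers the constant. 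Your approach is more transparent: the ``magic'' parameter of the paper is replaced by an explicit two-dimensional Cauchy--Schwarz whose tightness is visible, and no polarization identity is needed. The paper's technique, on the other hand, stays closer to the line of argument in \cite{Mo15} and \cite{WenZWTM15} and never passes through an operator-norm statement. One cosmetic point: you re-use the symbol $\v$ for $\P_S^{\perp}\A_{\Omega\setminus S}\x_{\Omega\setminus S}$, which clashes with the paper's use of $\v$ for the noise vector in \eqref{e:model}; pick a different letter when you write this out.
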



Note that Lemma \ref{l:main}  extends \cite[Lemma 1]{WenZWTM15} for $N=1$ to general $N$,
and will play a key role in proving Theorem \ref{t:kstep} below.
Although it is motivated by \cite[Lemma 1]{WenZWTM15} and~\cite[Lemma II.2]{Mo15},
it is stronger than \cite[Lemma 1]{WenZWTM15} and~\cite[Lemma II.2]{Mo15}
since  it holds for general $N$ and for the noisy case (which contains the noise-free case as a special case).
In contrast, \cite[Lemma 1]{WenZWTM15} is useful only when $N=1$,
and \cite[Lemma II.2]{Mo15} is applicable only when $N=1$ and $\v=\0$.
In addition, regarding the proof itself, there are two key distinctions between Lemma \ref{l:main} and \cite[Lemma 1]{WenZWTM15}.
Due to the limitation of space, the proof of Lemma \ref{l:main},
the connections and differences between it and that of \cite[Lemma 1]{WenZWTM15} are detailed in the supplementary file.

\begin{remark}
The condition $N(k+1)+|\Omega|-k\leq m$ in Lemma \ref{l:main} is to ensure the assumption
that $\A$ satisfies the RIP of order $N(k+1)+|\Omega|-\ell$ makes sense.
\end{remark}



With Lemma \ref{l:main}, we can prove the following theorem.
\begin{theorem}
\label{t:kstep}
Let $\A$ satisfy the RIP with
\beq
\label{e:delta}
\delta_{N(k+1)+|\Omega|-k} < \frac{1}{\sqrt {|\Omega|/N+1 }}
\eeq
for some integers $k$ and $N$ satisfying $0\leq k\leq |\Omega|-1$ and $N(k+1)+|\Omega|-k\leq m$.
Then gOMP identifies at least one index in $\Omega$ in each of the first $k+1$ iterations
until all the indexes in $\Omega$ are selected or gOMP terminates provided that
\begin{align}
\label{e:SNRld}
\sqrt{\mbox{SNR}}>\frac{\sqrt{2K}(1+\delta_{N(k+1)+|\Omega|-k})}{(1-\sqrt {|\Omega|/N+1 }\delta_{N(k+1)+|\Omega|-k})\sqrt{\mbox{MAR}}}.
\end{align}
\end{theorem}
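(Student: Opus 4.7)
The plan is to prove Theorem \ref{t:kstep} by induction on the iteration index, reducing the per-step success condition to a pointwise correlation inequality, applying Lemma \ref{l:main} to the signal part, and carefully bounding the noise-induced terms via the RIP.

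For the inductive step, assume each of the first $k$ iterations has identified at least one index of $\Omega$, so $|S_k|=kN$ and $\ell:=|\Omega\cap S_k|$ satisfies $k\le\ell\le|\Omega|$. If $\ell=|\Omega|$ there is nothing to prove; otherwise the residual satisfies $\rr^k=\P^{\bot}_{S_k}\y=\P^{\bot}_{S_k}\A_{\Omega\setminus S_k}\x_{\Omega\setminus S_k}+\P^{\bot}_{S_k}\v$. Let $W\subseteq\Omega^c\setminus S_k$ collect the $N$ indices of $\Omega^c\setminus S_k$ with the largest $|\A_j^T\rr^k|$. Since the minimum of a list is at most its average, gOMP picks an element of $\Omega$ at iteration $k{+}1$ once
\[ \max_{i\in\Omega\setminus S_k}|\A_i^T\rr^k|>\frac{1}{N}\sum_{j\in W}|\A_j^T\rr^k|. \]
Splitting by the (reverse) triangle inequality into a signal difference $M_s-A_s$ (with $\rr^k$ replaced by $\P^{\bot}_{S_k}\A_{\Omega\setminus S_k}\x_{\Omega\setminus S_k}$) and a noise sum $M_n+A_n$ (with $\P^{\bot}_{S_k}\v$), it suffices to show $M_s-A_s>M_n+A_n$. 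Lemma \ref{l:main} with $S=S_k$ and this $W$ bounds $M_s-A_s$; since $\ell\ge k$, Lemma \ref{l:monot} gives $\delta_{N(k+1)+|\Omega|-\ell}\le\delta:=\delta_{N(k+1)+|\Omega|-k}$ and $(|\Omega|-\ell)/N+1\le|\Omega|/N+1$, and combining with $\|\x_{\Omega\setminus S_k}\|_2\ge\sqrt{|\Omega|-\ell}\,x_{\min}$, where $x_{\min}:=\min_{i\in\Omega}|x_i|$, yields $M_s-A_s\ge(1-\sqrt{|\Omega|/N+1}\,\delta)\,x_{\min}$.

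The main obstacle is bounding $M_n+A_n$ tightly enough to recover the factor $\sqrt{2}$ appearing in \eqref{e:SNRld}: the naive estimates $M_n\le\sqrt{1+\delta_1}\|\v\|_2$ and $A_n\le\sqrt{(1+\delta_N)/N}\|\v\|_2$ together yield a $(1+1/\sqrt{N})\sqrt{1+\delta}$ prefactor, which is too loose at $N=1$. I would instead couple the two terms: let $i^*\in\Omega\setminus S_k$ be the argmax and set $T=\{i^*\}\cup W$, so $|T|=N+1$ because $W\cap\Omega=\emptyset$. Writing
\[ M_n+A_n=\sum_{\ell\in T}c_\ell\,|\A_\ell^T\P^{\bot}_{S_k}\v|, \quad c_{i^*}=1,\; c_j=\tfrac{1}{N}\text{ for }j\in W, \]
and applying Cauchy--Schwarz to the vectors $(c_\ell)_{\ell\in T}$ and $(|\A_\ell^T\P^{\bot}_{S_k}\v|)_{\ell\in T}$ gives $M_n+A_n\le\sqrt{1+1/N}\,\|\A_T^T\P^{\bot}_{S_k}\v\|_2$. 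Lemma \ref{l:AtRIP} applied to $T$ (whose size $N+1\le N(k+1)+|\Omega|-k$), monotonicity of the RIC, and $\|\P^{\bot}_{S_k}\v\|_2\le\|\v\|_2$ then combine to
\[ M_n+A_n\le\sqrt{(N+1)(1+\delta)/N}\,\|\v\|_2\le\sqrt{2(1+\delta)}\,\|\v\|_2. \]

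Combining the two bounds, the success criterion $M_s-A_s>M_n+A_n$ reduces to $(1-\sqrt{|\Omega|/N+1}\,\delta)\,x_{\min}>\sqrt{2(1+\delta)}\,\|\v\|_2$. Using $x_{\min}=\sqrt{\mbox{MAR}/K}\,\|\x\|_2$ together with $\|\x\|_2\ge\|\A\x\|_2/\sqrt{1+\delta_{|\Omega|}}\ge\|\A\x\|_2/\sqrt{1+\delta}$ converts this directly into the SNR condition \eqref{e:SNRld}, closing the induction and completing the proof.
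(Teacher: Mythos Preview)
Your proof is correct and tracks the paper's argument closely: the same induction, the same reduction to the correlation inequality $\max_{i\in\Omega}|\A_i^T\rr^k|>\frac{1}{N}\sum_{j\in W}|\A_j^T\rr^k|$, the same signal/noise split, and the same use of Lemma~\ref{l:main} (plus monotonicity and $\|\x_{\Omega\setminus S_k}\|_2\ge\sqrt{|\Omega|-\ell}\,x_{\min}$) for the signal part.

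The only substantive deviation is in the noise bound. The paper replaces the average noise term by its maximum, keeps just the two indices $i_0\in\Omega\setminus S_k$ and $j_0\in W$ that attain the two maxima, and applies the $\ell_1$--$\ell_2$ inequality on the $2$-vector $\A_{\{i_0,j_0\}}^T\P^{\perp}_{S_k}\v$ together with Lemma~\ref{l:AtRIP} to obtain $\beta_2\le\sqrt{2(1+\delta)}\,\|\v\|_2$. You instead keep all $N+1$ indices $T=\{i^*\}\cup W$ and apply a weighted Cauchy--Schwarz, which yields the sharper intermediate constant $\sqrt{(N+1)/N}$ before you relax it to $\sqrt{2}$. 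Both routes give exactly the bound needed for \eqref{e:SNRld}; yours would in fact support a slightly weaker SNR requirement for $N\ge 2$ had the theorem been stated that way.
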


\begin{proof}
See Appendix \ref{ss:kstep}.
\end{proof}


By Theorem \ref{t:kstep} with $k=|\Omega|-1$ and Lemma \ref{l:monot}, we can obtain Theorem \ref{t:main1} below.

\begin{theorem}
\label{t:main1}
Let $\A$  satisfy the RIP with
\beq
\label{e:delta2}
\delta_{NK+1} < \frac{1}{\sqrt {K/N+1 }},
\eeq
for an integer $N$ with $1\leq N\leq (m-1)/K$.
Then  gOMP either identifies at least $k_0$ indexes in $\Omega$ if gOMP terminates after performing $k_0$ iterations with $1\leq k_0< K$
or  recovers $\Omega$ in $K$ iterations provided that
\begin{align}
\label{e:SNRld2}
\sqrt{\mbox{SNR}}>\frac{\sqrt{2K}(1+\delta_{NK+1})}{(1-\sqrt {|\Omega|/N+1 }\delta_{NK+1})\sqrt{\mbox{MAR}}}.
\end{align}
\end{theorem}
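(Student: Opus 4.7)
The plan is to deduce Theorem \ref{t:main1} as a direct specialization of Theorem \ref{t:kstep} with the choice $k = |\Omega| - 1$, combined with the monotonicity of the RIC given by Lemma \ref{l:monot}. Since $\x$ is $K$-sparse, $|\Omega|\leq K$, which is what lets us connect the order-$NK+1$ hypothesis of Theorem \ref{t:main1} to the order-$N|\Omega|+1$ hypothesis Theorem \ref{t:kstep} actually uses.

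First, I would check the dimensional sanity condition: with $k = |\Omega|-1$, we have $N(k+1) + |\Omega| - k = N|\Omega|+1 \leq NK+1 \leq m$, where the last inequality uses $N\leq (m-1)/K$. So the RIC $\delta_{N|\Omega|+1}$ is well defined. Next, to verify the RIP hypothesis \eqref{e:delta} of Theorem \ref{t:kstep} for $k = |\Omega|-1$, note that Lemma \ref{l:monot} gives $\delta_{N|\Omega|+1}\leq \delta_{NK+1}$, while $|\Omega|\leq K$ gives the elementary inequality $1/\sqrt{K/N+1}\leq 1/\sqrt{|\Omega|/N+1}$. Chaining these with the assumption \eqref{e:delta2} yields $\delta_{N|\Omega|+1} < 1/\sqrt{|\Omega|/N+1}$, which is exactly \eqref{e:delta} with $k=|\Omega|-1$.

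To verify the SNR hypothesis \eqref{e:SNRld} at $k = |\Omega|-1$, I would invoke the monotonicity of the map $g(t) = (1+t)/(1 - \sqrt{|\Omega|/N+1}\,t)$ on $[0,1/\sqrt{|\Omega|/N+1})$, which is strictly increasing. Thus $\delta_{N|\Omega|+1}\leq \delta_{NK+1}$ forces the right-hand side of \eqref{e:SNRld} to be no larger than that of \eqref{e:SNRld2}, so \eqref{e:SNRld2} implies \eqref{e:SNRld}. With every hypothesis of Theorem \ref{t:kstep} in place, its conclusion tells us that gOMP identifies at least one element of $\Omega$ in each of the first $|\Omega|$ iterations, until either all of $\Omega$ has been picked up or the algorithm halts.

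The two cases of the theorem then follow directly. If gOMP runs the full $K$ iterations, then since $|\Omega|\leq K$, the first $|\Omega|$ iterations each contribute a new element of $\Omega$, so $\Omega \subseteq S_{|\Omega|}\subseteq S_K$ and the support is recovered. If instead gOMP terminates at some iteration $k_0<K$, each of those $k_0$ iterations has already contributed at least one new element of $\Omega$, giving $|S_{k_0}\cap \Omega|\geq k_0$. There is no serious obstacle in this argument: the only thing to be careful about is the monotonicity bookkeeping that shows strengthening the RIP order from $N|\Omega|+1$ to $NK+1$ simultaneously tightens both the RIC bound and the SNR threshold in the correct direction, which is the role played by the increasing function $g$ and Lemma \ref{l:monot}.
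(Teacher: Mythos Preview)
Your proposal is correct and follows exactly the paper's own argument: the paper proves Theorem~\ref{t:main1} in a single sentence, invoking Theorem~\ref{t:kstep} with $k=|\Omega|-1$ together with Lemma~\ref{l:monot}, and you have simply spelled out the monotonicity bookkeeping (for both the RIC order and the SNR threshold via the increasing function $g$) that this invocation requires.
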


When $N=1$, gOMP reduces to OMP, and the following result can be directly obtained from Theorem \ref{t:main1}.
\begin{corollary}
\label{c:main1}
Let $\A$ satisfy the RIP with
$\delta_{K+1} < 1/\sqrt {K+1}$.
Then  OMP either identifies at least $k_0$ indexes in $\Omega$ if it terminates after performing $k_0$ iterations with $1\leq k_0< K$
or it recovers $\Omega$ in $K$ iterations provided that
\begin{align}
\label{e:SNRldN=1}
\sqrt{\text{SNR}} > \frac{\sqrt{2K}(1+\delta_{K+1})}{(1-\sqrt {|\Omega|+1 }\delta_{K+1})\sqrt{\mbox{MAR}}}.
\end{align}
\end{corollary}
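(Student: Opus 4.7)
The plan is to obtain Corollary \ref{c:main1} as an immediate specialization of Theorem \ref{t:main1} at $N=1$, using the fact, noted right after Algorithm \ref{a:gOMP}, that gOMP with $N=1$ is exactly OMP. So the only work is checking that every hypothesis and every quantity in Theorem \ref{t:main1} collapses correctly to its counterpart in the corollary under the substitution $N=1$.

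First I would check admissibility: the constraint $1\le N\le(m-1)/K$ in Theorem \ref{t:main1} is satisfied by $N=1$ provided $K\le m-1$, which is already implicit since otherwise the assumed RIP of order $K+1$ on $\A\in\mathbb{R}^{m\times n}$ would be vacuous. Next I would specialize the RIP hypothesis: with $N=1$, the inequality \eqref{e:delta2}, namely $\delta_{NK+1}<1/\sqrt{K/N+1}$, becomes exactly $\delta_{K+1}<1/\sqrt{K+1}$, matching the hypothesis of Corollary \ref{c:main1}. Finally, I would specialize the SNR bound \eqref{e:SNRld2}: replacing $N$ by $1$ turns the right-hand side into
\[
\frac{\sqrt{2K}\,(1+\delta_{K+1})}{(1-\sqrt{|\Omega|+1}\,\delta_{K+1})\sqrt{\mbox{MAR}}},
\]
which is precisely the right-hand side of \eqref{e:SNRldN=1}.

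Since the conclusion of Theorem \ref{t:main1}—either at least $k_0$ support indices are identified upon premature termination after $k_0<K$ iterations, or the full support is recovered in $K$ iterations—already holds for gOMP with any admissible $N$ and carries over verbatim at $N=1$ with gOMP read as OMP, no further argument is needed. There is really no obstacle in this proof; the only mild pitfall is remembering to verify that the SNR denominator $1-\sqrt{|\Omega|/N+1}\,\delta_{NK+1}$ is positive under the corollary's RIP hypothesis so that \eqref{e:SNRldN=1} is a meaningful bound, and this follows at once from $|\Omega|\le K$ together with $\delta_{K+1}<1/\sqrt{K+1}$.
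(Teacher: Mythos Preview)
Your proposal is correct and matches the paper's own treatment exactly: the paper derives Corollary \ref{c:main1} simply by noting that gOMP with $N=1$ is OMP and that the result ``can be directly obtained from Theorem \ref{t:main1}.'' Your careful check that each hypothesis and quantity specializes correctly under $N=1$ is more explicit than the paper's one-line justification, but the approach is identical.
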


\begin{remark}
The recovery condition for OMP in \cite[Theorem 3.1]{Wan15} is
\[
\delta_{K+1} < \frac{1}{\sqrt {K }+1},\,\;
\sqrt{\mbox{SNR}} > \frac{2 \sqrt{K} (1 + \delta_{K + 1})}{(1 - (\sqrt K + 1)\delta_{K + 1}) \sqrt{\text{MAR}}}.
\]
Clearly, our sufficient condition given  by Corollary \ref{c:main1} is less restrictive
than that given by \cite[Theorem 3.1]{Wan15} in terms of both RIC and SNR.

\end{remark}

Notice that gOMP  may terminate after performing $k_0$ with $0<k_0<K$ iterations,
and in this case $\Omega$ is not guaranteed to be recovered by gOMP under \eqref{e:delta2} and \eqref{e:SNRld2}.
However, we have:
\begin{theorem}
\label{t:main2}
Suppose that $\v=\0$, 
and $\A$ satisfies the RIP with \eqref{e:delta2}
for an integer $N$ with $1\leq N\leq (m-1)/K$. Then gOMP recovers $\x$ in $K$ iterations.
\end{theorem}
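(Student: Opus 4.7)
The plan is to deduce Theorem \ref{t:main2} from Theorem \ref{t:main1} by handling the only case it leaves open, namely early termination of gOMP. Because $\v=\0$ makes $\mbox{SNR}=+\infty$, the SNR requirement \eqref{e:SNRld2} is automatically satisfied, so Theorem \ref{t:main1} directly applies: either gOMP performs $K$ full iterations and recovers $\Omega$ (and we are done), or it terminates after some $k_0$ iterations with $1\le k_0<K$, in which case we are only told that $|S_{k_0}\cap\Omega|\ge k_0$. The whole job is therefore to show that in this second scenario we actually have $\Omega\subseteq S_{k_0}$.

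First, I would pin down what ``termination'' means when $\v=\0$. Since the algorithm stops as soon as $\|\rr^k\|_2\le \epsilon$, the natural (and standard) choice in the noise-free setting is to take $\epsilon=0$, so early termination at iteration $k_0$ is equivalent to $\rr^{k_0}=\0$. Using $\y=\A_\Omega\x_\Omega$ and the fact that $\P^\bot_{S_{k_0}}\A_i=\0$ for $i\in S_{k_0}$, this gives
\beqnn
\0=\rr^{k_0}=\P^\bot_{S_{k_0}}\A_\Omega\x_\Omega=\P^\bot_{S_{k_0}}\A_{\Omega\setminus S_{k_0}}\x_{\Omega\setminus S_{k_0}}.
\eeqnn

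Next I would invoke Lemma \ref{l:orthogonalcomp} with $S_1=S_{k_0}$ and $S_2=\Omega$. The only thing to check is that $\A$ satisfies the RIP of order $|S_{k_0}\cup\Omega|$. Using $|S_{k_0}|=k_0N$ together with the Theorem \ref{t:main1} guarantee $|S_{k_0}\cap\Omega|\ge k_0$, we get $|\Omega\setminus S_{k_0}|\le K-k_0$, hence
\beqnn
|S_{k_0}\cup S_{k_0}|=k_0N+|\Omega\setminus S_{k_0}|\le k_0(N-1)+K\le (K-1)(N-1)+K\le NK+1,
\eeqnn
so by Lemma \ref{l:monot} we have $\delta_{|S_{k_0}\cup\Omega|}\le\delta_{NK+1}<1/\sqrt{K/N+1}<1$. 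Lemma \ref{l:orthogonalcomp} then yields
\beqnn
0=\|\P^\bot_{S_{k_0}}\A_{\Omega\setminus S_{k_0}}\x_{\Omega\setminus S_{k_0}}\|_2^2\ge(1-\delta_{|S_{k_0}\cup\Omega|})\|\x_{\Omega\setminus S_{k_0}}\|_2^2,
\eeqnn
forcing $\x_{\Omega\setminus S_{k_0}}=\0$. Since every entry of $\x$ indexed by $\Omega$ is nonzero, this means $\Omega\setminus S_{k_0}=\emptyset$, i.e.\ $\Omega\subseteq S_{k_0}$.

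Finally, once $\Omega\subseteq S_{k_0}$, the RIP (which ensures $\A_{S_{k_0}}$ has full column rank) makes the LS step uniquely solvable with $\y=\A_\Omega\x_\Omega=\A_{S_{k_0}}\x_{S_{k_0}}$, so the output $\hbx$ coincides with $\x$, completing the proof. I don't expect a serious obstacle here; the only mildly delicate point is the bookkeeping needed to certify $|S_{k_0}\cup\Omega|\le NK+1$ uniformly over $1\le k_0\le K-1$ so that the RIP hypothesis \eqref{e:delta2} is strong enough to apply Lemma \ref{l:orthogonalcomp}.
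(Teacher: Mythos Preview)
Your proof is correct and follows essentially the same strategy as the paper: apply Theorem~\ref{t:main1} (the SNR condition being vacuous since $\v=\0$) and then argue that early termination with $\rr^{k_0}=\0$ forces $\Omega\subseteq S_{k_0}$ via the RIP of order at most $NK+1$. The paper packages the early-termination step as a separate lemma (Lemma~\ref{l:stop}) and proves it by the full-column-rank of $\A_{\Omega\cup S_{k_0}}$, whereas you use Lemma~\ref{l:orthogonalcomp} on $\P^{\bot}_{S_{k_0}}\A_{\Omega\setminus S_{k_0}}$; these are equivalent RIP consequences, so the difference is cosmetic (note the typo $|S_{k_0}\cup S_{k_0}|$ should read $|S_{k_0}\cup\Omega|$, and Lemma~\ref{l:orthogonalcomp} formally requires $|\Omega\setminus S_{k_0}|\ge1$, which is exactly the case you need to rule out).
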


\begin{remark}
In the noise-free case, the best known  condition on $\delta_{NK+1}$
for accurately recovering $\x$ with gOMP in $K$ iterations
is $\delta_{NK+1} < 1/(\sqrt {K/N}+1)$ \cite{SatDC13}.
Obviously,  our sufficient condition given by Theorem \ref{t:main2} is less restrictive.
\end{remark}

Note that Theorem \ref{t:main2} can be directly obtained from Theorem \ref{t:main1} and Lemma \ref{l:stop} below.

\begin{lemma}
\label{l:stop}
Suppose that  $\v=\0$, 
and $\A$ satisfies the RIP with \eqref{e:delta}
for some integers $k$ and $N$ with $1\leq k\leq |\Omega|-1$ and  $1\leq N\leq (m-1)/K$.
If  there exists an integer  $k_0$ with $0<k_0\leq k$
and $|\Omega\cap  S_{k_0}|\geq k_0$ such that $||r^{k_0}||_2=0$
(see Algorithm  \ref{a:gOMP} for the definitions of $S_{k_0}$ and $r^{k_0}$ ).
Then $\Omega\subseteq S_{k_0}$.
\end{lemma}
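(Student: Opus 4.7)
The plan is to use $\|\rr^{k_0}\|_2 = 0$ to conclude that $\A\x$ lies in the range of $\A_{S_{k_0}}$, then invoke the RIP-induced full-column-rank property of a certain submatrix of $\A$ to force $\x_{\Omega \setminus S_{k_0}} = \0$.

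First I would observe that because $\v=\0$, step~6 of Algorithm~\ref{a:gOMP} gives $\rr^{k_0} = \P^{\bot}_{S_{k_0}} \y = \P^{\bot}_{S_{k_0}} \A\x$. The hypothesis $\|\rr^{k_0}\|_2 = 0$ therefore means that $\A\x = \A_{S_{k_0}}\z$ for some $\z$, i.e.\ $\A_\Omega \x_\Omega - \A_{S_{k_0}}\z = \0$. Rewriting this as $\A_{\Omega \cup S_{k_0}}\w = \0$, where $\w$ is the vector supported on $\Omega \cup S_{k_0}$ whose coordinates equal $x_i$ on $\Omega\setminus S_{k_0}$, $x_i - z_i$ on $\Omega \cap S_{k_0}$, and $-z_i$ on $S_{k_0}\setminus\Omega$, reduces the problem to showing $\A_{\Omega \cup S_{k_0}}$ has full column rank.

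The next step is a counting argument. Since $|S_{k_0}|=k_0 N$ and, by hypothesis, $|\Omega \cap S_{k_0}|\geq k_0$, we have
\begin{equation*}
|\Omega \cup S_{k_0}| = |\Omega| + k_0 N - |\Omega \cap S_{k_0}| \leq |\Omega| + k_0(N-1).
\end{equation*}
Because $k_0 \leq k$ and $N\geq 1$, this bound is at most $|\Omega| + k(N-1) \leq N(k+1) + |\Omega| - k$. Hence by Lemma~\ref{l:monot} together with the hypothesis \eqref{e:delta},
\begin{equation*}
\delta_{|\Omega \cup S_{k_0}|} \leq \delta_{N(k+1)+|\Omega|-k} < \frac{1}{\sqrt{|\Omega|/N+1}} < 1,
\end{equation*}
so $\A_{\Omega \cup S_{k_0}}$ satisfies the lower RIP bound with a positive constant, i.e.\ has linearly independent columns.

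Finally, full column rank forces $\w = \0$; reading off the coordinates indexed by $\Omega\setminus S_{k_0}$ yields $x_i = 0$ for every $i \in \Omega\setminus S_{k_0}$. Since every index in $\Omega$ corresponds to a nonzero entry of $\x$ by definition of the support, this forces $\Omega\setminus S_{k_0} = \emptyset$, i.e.\ $\Omega \subseteq S_{k_0}$, which completes the argument. I do not anticipate any serious obstacle; the only step that needs care is the cardinality estimate $|\Omega \cup S_{k_0}| \leq N(k+1)+|\Omega|-k$, where both the hypothesis $|\Omega \cap S_{k_0}| \geq k_0$ and the assumption $k_0 \leq k$ are used in an essential way to make the RIP available via monotonicity.
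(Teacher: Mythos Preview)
Your proposal is correct and follows essentially the same approach as the paper: both arguments bound $|\Omega\cup S_{k_0}|\le N(k+1)+|\Omega|-k$ via the hypotheses $|\Omega\cap S_{k_0}|\ge k_0$ and $k_0\le k$, deduce from \eqref{e:delta} that $\A_{\Omega\cup S_{k_0}}$ has full column rank, and conclude that the entries of $\x$ on $\Omega\setminus S_{k_0}$ must vanish. The only cosmetic difference is that the paper phrases this as a contradiction (defining $\bar{\x},\tilde{\x}\in\mathbb{R}^{|\Omega\cup S_{k_0}|}$ with $\A_{\Omega\cup S_{k_0}}\bar{\x}=\A_{\Omega\cup S_{k_0}}\tilde{\x}$ and deriving $\bar{\x}=\tilde{\x}$), whereas you argue directly via $\A_{\Omega\cup S_{k_0}}\w=\0\Rightarrow\w=\0$.
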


\begin{proof}
We prove this lemma  by contradiction.
Suppose that $\Omega\not\subseteq S_{k_0}$ and let $\Gamma=\Omega\cup S_{k_0}$.
Let $\bar{\x}, \tilde{\x}\in \mathbb{R}^{|\Gamma|}$ satisfy
$\bar{x}_i=x_i$ for $i\in\Omega $ and $\bar{x}_i=0$ for $i\notin\Omega $,
and $\tilde{x}_i=(\hat{\x}_{S_{k_0}})_i$ for $i\in S_{k_0} $ and $\tilde{x}_i=0$ for $i\notin S_{k_0}$, where $\hat{\x}_{S_{k_0}}$ is
the vector generated by Algorithm \ref{a:gOMP}.
Since $\|r^{k_0}\|_2=0$, by line 6 of Algorithm \ref{a:gOMP}, $\A_{S_{k_0}}\hat{\x}_{S_{k_0}}=\y$, we have
\beq
\label{e:xbarxtilde}
\A_{\Gamma}\bar{\x}=\A_{\Omega}\x_{\Omega}=\A\x=\y=\A_{S_{k_0}}\hat{\x}_{S_{k_0}}=\A_{\Gamma}\tilde{\x}.
\eeq
Note that $|\Omega\cap  S_{k_0}|\geq k_0$  and $\Gamma=\Omega\cup S_{k_0}$. Thus
\[
|\Gamma|=|\Omega|+|S_{k_0}|-|\Omega\cap  S_{k_0}|
\leq |\Omega|+Nk-k\leq N(k+1)+|\Omega|-k.
\]
By \eqref{e:delta}, $\A_{\Gamma}$ is full column rank. Thus, applying \eqref{e:xbarxtilde} yields $\bar{\x}=\tilde{\x}$.

On the other hand, by the definitions of $\bar{\x}$ and $\tilde{\x}$, and the assumption that $\Omega\not\subseteq S_{k_0}$,
there exists $j\in (\Omega \setminus S_{k_0})$ such that $\bar{x}_j\neq 0$ but  $\tilde{x}_j=0$. This implies that
$\bar{\x}\neq\tilde{\x}$ which contradicts with $\bar{\x}=\tilde{\x}$. Completing the proof.
\end{proof}

\begin{remark}
When $N=1$, Theorem \ref{t:main2} reduces to \cite[Theorem III.1]{Mo15}.
\end{remark}

\section{Numerical tests}
\label{s:num}
In this section, we do numerical tests to illustrate Theorems \ref{t:main1} and \ref{t:main2}.
Since constructing general non-square deterministic matrices satisfying RIP with a given
RIC is still an open problem, we use square sensing matrices to do tests.
Specifically, for each given $K$ and $N$, we assume $n=NK+1$ and let $\A=\D\U$,
where $\D\in \mathbb{R}^{n\times n}$ is a diagonal matrix with $d_{ii}$ being uniformly distributed
over $\left[\sqrt{1-\frac{0.99}{\sqrt {K/N+1 }}},\sqrt{1+\frac{0.99}{\sqrt {K/N+1 }}}\right]$
for $1\leq i \leq n$, and $\U\in \mathbb{R}^{n\times n}$ is an orthogonal matrix
obtained by the QR factorization of a random matrix whose entries
independent and identically follow the standard normal distribution.
Then, by the definition of RIP, one can easily verify that $\A$ satisfies the RIP with \eqref{e:delta2}.
For a given $K$, we generate a $K$-sparse vector $\x\in \mathbb{R}^{n}$.
To illustrate Theorems \ref{t:main1} and \ref{t:main2}, we respectively assume
$\v=\frac{\|\A\x\|_2}{\sqrt{\mbox{SNR}}}\frac{\bar{\v}}{\|\bar{\v}\|_2}$ and $\v=\0$,
where $\bar{\v} \sim \mathcal{N}(\0,\I)$, and
\[
\sqrt{\mbox{SNR}}=0.01+\frac{\sqrt{2K}(1+\delta_{NK+1})}{(1-\sqrt {|\Omega|/N+1 }\delta_{NK+1})\sqrt{\mbox{MAR}}}.
\]
Note that $\mbox{MAR}$ can be computed via \eqref{e:SNR} and
$\delta_{NK+1}=\max\{1-\min_{1\leq i\leq n}d_{ii}, \max_{1\leq i\leq n}d_{ii}-1\}$.
Clearly, \eqref{e:SNRld2} holds.
After generating $\A,\x$ and $\v$, $\y$ can be computed via \eqref{e:model}.
Finally, we set $\epsilon=\|\v\|_2$ and use gOMP to recover $\x$.
We did lots of tests by choosing different $K$ and $N$ and found that
gOMP can always accurately recovering $\x$ in the noise-free case and
find its support in the noisy case.

\section{Conclusion}
\label{s:con}

In this paper, we have shown that under some conditions on SNR and MAR, $\delta_{NK+1} < 1/\sqrt {K/N+1}$
is a sufficient condition for the exact support recovery of $K$-sparse signals with gOMP.
Surprisingly, unlike that in \cite{WanKS12} and \cite{LiuT12}, this condition does not require $N\leq K$ which provides more choices for $N$.
When $N=1$, it is a sufficient condition for OMP and it is  better than that proposed in \cite{Wan15}.
In the noise-free case, it is a sufficient condition for accurately recovering $K$-sparse signals
with gOMP in $K$ iterations,
which is  better than the best known one in terms of $\delta_{NK+1}$ in \cite{SatDC13}.
Moreover, it reduces to the sharp condition in \cite{Mo15} when $N=1$.



\begin{appendices}
\section{Proof of Lemma \ref{l:main}}

In the following, we extend the proof of \cite[Lemma 1]{WenZWTM15} 
for $N=1$ to general $N$ to prove Lemma \ref{l:main}.
Although our proof is highly relying on the techniques used in proving
\cite[Lemma 1]{WenZWTM15}  and \cite[Lemma II.1]{Mo15}, 
there are two main distinctions between these proofs.
On the one hand, instead of defining a scalar $t$ as in 
\cite[Lemma 1]{WenZWTM15}  and \cite[Lemma II.1]{Mo15},
we need to define a vector $\overline{\e}_{W}\in \mathbb{R}^{N}$ (see \eqref{e:ew}) to explore the fact that $|W|=N$.
On the other hand, the choice of $\alpha$ (see \eqref{e:alpha}) is also  different.
One can see from the following proof that both the well-defined $\overline{\e}_{W}$ and
well-chosen $\alpha$ play a key role in proving Lemma \ref{l:main}.

{\em Proof of Lemma \ref{l:main}.}
By \cite[(21)]{WenZWTM15}, we have
\begin{align}
\label{e:Ax}
&\|\P^{\bot}_S\A_{\Omega\setminus S}\x_{\Omega\setminus S}\|_2^2\nonumber\\
\leq&\sqrt{|\Omega|-\ell}\|\x_{\Omega\setminus S}\|_2\max_{i\in \Omega\setminus S}|\A_{i}^T\P^{\bot}_S\A_{\Omega\setminus S}\x_{\Omega\setminus S}|.
\end{align}
In fact, since $|\Omega\cap S|=\ell\leq |\Omega|-1$, $\|\x_{\Omega\setminus S}\|_1\neq0$. Thus, we obtain
\begin{align*}
\,\;&\max_{i\in \Omega\setminus S}|\A_{i}^T\P^{\bot}_S\A_{\Omega\setminus S}\x_{\Omega\setminus S}|\\
=&\frac{1}{\|\x_{\Omega\setminus S}\|_1}
(\sum_{j\in \Omega\setminus S}|x_j|)\max_{i\in \Omega\setminus S}|\A_{i}^T\P^{\bot}_S\A_{\Omega\setminus S}\x_{\Omega\setminus S}|\\
\overset{(a)}{\geq}&\frac{1}{\sqrt{|\Omega|-\ell}\|\x_{\Omega\setminus S}\|_2}(\sum_{j\in \Omega\setminus S}|x_j|)
\max_{i\in \Omega\setminus S}|\A_{i}^T\P^{\bot}_S\A_{\Omega\setminus S}\x_{\Omega\setminus S}|\\
\geq&\frac{1}{\sqrt{|\Omega|-\ell}\|\x_{\Omega\setminus S}\|_2}\sum_{j\in \Omega\setminus S}
\big(|x_j\A_j^T\P^{\bot}_S\A_{\Omega\setminus S}\x_{\Omega\setminus S}|\big)\\
\geq&\frac{1}{\sqrt{|\Omega|-\ell}\|\x_{\Omega\setminus S}\|_2}\sum_{j\in \Omega\setminus S}
\big(x_j\A_j^T\P^{\bot}_S\A_{\Omega\setminus S}\x_{\Omega\setminus S}\big)\\
=&\frac{1}{\sqrt{|\Omega|-\ell}\|\x_{\Omega\setminus S}\|_2}
\big(\sum_{j\in \Omega\setminus S}x_{j}\A_{j}\big)^T\P^{\bot}_S\A_{\Omega\setminus S}\x_{\Omega\setminus S}\\
=&\frac{1}{\sqrt{|\Omega|-\ell}\|\x_{\Omega\setminus S}\|_2}
\big(\A_{\Omega\setminus S}\x_{\Omega\setminus S}\big)^T\P^{\bot}_S\A_{\Omega\setminus S}\x_{\Omega\setminus S}\\
\overset{(b)}{=}&\frac{1}{\sqrt{|\Omega|-\ell}\|\x_{\Omega\setminus S}\|_2}\|\P^{\bot}_S\A_{\Omega\setminus S}\x_{\Omega\setminus S}\|_2^2,
\end{align*}
where (a) follows from $|\text{supp}(\x_{\Omega\setminus S})|=|\Omega|-\ell$ and the Cauchy-Schwarz inequality,
and (b) is from
\begin{align}
\label{e:orthcom}
(\P^{\bot}_S)^T\P^{\bot}_S=\P^{\bot}_S\P^{\bot}_S=\P^{\bot}_S.
\end{align}
Thus, \eqref{e:Ax} holds.

Let
\beq
\label{e:alpha}
\alpha=-\frac{\sqrt{(|\Omega|-\ell)/N+1}-1}{\sqrt{(|\Omega|-\ell)/N}},
\eeq
then by one can easily verify that
\beq
\label{e:alphaproperty}
\frac{2\alpha}{1-\alpha^2}=-\sqrt{\frac{|\Omega|-\ell}{N}}, \quad
\frac{1+\alpha^2}{1-\alpha^2}=\sqrt{\frac{|\Omega|-\ell}{N}+1}.
\eeq

To simplify notation, let $W=\{j_1, j_2,\ldots, j_N\}$
and define $\overline{\e}_{W}\in \mathbb{R}^{N}$ with
\beq
\label{e:ew}
(\overline{e}_{W})_i=
 \begin{cases}
      1 & \mbox{if }\A^T_{j_i}\P^{\bot}_S\A_{\Omega\setminus S}\x_{\Omega\setminus S}\geq 0 \\
      -1 & \mbox{if }\A^T_{j_i}\P^{\bot}_S\A_{\Omega\setminus S}\x_{\Omega\setminus S}< 0
   \end{cases}
, \quad 1\leq i\leq N.
\eeq
Then,
\beq
\label{e:sum}
\overline{\e}_{W}^T\A^T_{W}\P^{\bot}_S\A_{\Omega\setminus S}\x_{\Omega\setminus S}
=\sum_{j\in W}|\A_{j}^T\P^{\bot}_S\A_{\Omega\setminus S}\x_{\Omega\setminus S}|.
\eeq

Furthermore, define
\begin{align}
\label{e:B}
\B=&\P^{\bot}_S
\bmx
\A_{\Omega\setminus S}&\A_{ W}
\emx,\\
\u=&
\bmx
\x_{\Omega\setminus S}\\ \0
\emx\in \mathbb{R}^{|\Omega\setminus S|+N}, \nonumber\\
\w=&
\bmx
\0\\ \frac{\alpha\|\x_{\Omega\setminus S}\|_2}{\sqrt{N}}\overline{\e}_{W}
\emx\in \mathbb{R}^{|\Omega\setminus S|+N}.\nonumber
\end{align}
Then,
\begin{align}
\label{e:AB}
\B\u=\P^{\bot}_S\A_{\Omega\setminus S}\x_{\Omega\setminus S},
\end{align}
and
\begin{align}
\label{e:uw}
\|\u+\w\|_2^2&=(1+\alpha^2)\|\x_{\Omega\setminus S}\|_2^2, \\
\label{e:alphauw}
\|\alpha^2\u-\w\|_2^2&=\alpha^2(1+\alpha^2)\|\x_{\Omega\setminus S}\|_2^2.
\end{align}
Thus,
\begin{align*}
\,&\w^T\B^T\B\u\\
\overset{(a)}{=}&\frac{\alpha\|\x_{\Omega\setminus S}\|_2}{\sqrt{N}}\overline{\e}_{W}^T\A^T_{W}
(\P^{\bot}_S)^T\P^{\bot}_S\A_{\Omega\setminus S}\x_{\Omega\setminus S}\\
\overset{(b)}{=}&\frac{\alpha\|\x_{\Omega\setminus S}\|_2}{\sqrt{N}}\overline{\e}_{W}^T\A^T_{W}
\P^{\bot}_S\A_{\Omega\setminus S}\x_{\Omega\setminus S}\\
\overset{(c)}{=}&\frac{\alpha\|\x_{\Omega\setminus S}\|_2}{\sqrt{N}}\sum_{j\in W}|\A_{j}^T\P^{\bot}_S\A_{\Omega\setminus S}\x_{\Omega\setminus S}|,
\end{align*}
where (a) follows from \eqref{e:B}-\eqref{e:AB}; (b) follows from \eqref{e:orthcom}, and (c) is from \eqref{e:sum}.
Therefore, we have
\begin{align}
&\|\B(\u+\w)\|_2^2-\|\B(\alpha^2\u-\w)\|_2^2\nonumber \\
=&(1-\alpha^4)\|\B\u\|_2^2+2(1+\alpha^2)\w^T\B^T\B\u\nonumber \\
=&(1-\alpha^4)\left(\|\B\u\|_2^2+\frac{2}{1-\alpha^2}\w^T\B^T\B\u\right)\nonumber \\
=&(1-\alpha^4)\big(\|\B\u\|_2^2\nonumber \\
&\quad+\frac{2\alpha}{1-\alpha^2}\frac{\|\x_{\Omega\setminus S}\|_2}{\sqrt{N}}
\sum_{j\in W}|\A_{j}^T\P^{\bot}_S\A_{\Omega\setminus S}\x_{\Omega\setminus S}|\big)\nonumber \\
=&(1-\alpha^4)\big(\|\B\u\|_2^2\nonumber \\
&\quad-\frac{\sqrt{|\Omega|-\ell}\|\x_{\Omega\setminus S}\|_2}{N}\sum_{j\in W}|\A_{j}^T\P^{\bot}_S
\A_{\Omega\setminus S}\x_{\Omega\setminus S}|\big),
\label{e:transf11}
\end{align}
where the last equality follows from the first equality in \eqref{e:alphaproperty}.

On the other hand, we have
\begin{align}
\label{e:transf12}
\,\;&\|\B(\u+\w)\|_2^2-\|\B(\alpha^2\u-\w)\|_2^2 \nonumber \\
\overset{(a)}{\geq}&(1-\delta_{N(k+1)+|\Omega|-\ell})\|(\u+\w)\|_2^2\nonumber \\
&-(1+\delta_{N(k+1)+|\Omega|-\ell})\|(\alpha^2\u-\w)\|_2^2\nonumber \\
\overset{(b)}{=}&(1-\delta_{N(k+1)+|\Omega|-\ell})(1+\alpha^2)\|\x_{\Omega\setminus S}\|_2^2\nonumber \\
&-(1+\delta_{N(k+1)+|\Omega|-\ell})\alpha^2(1+\alpha^2)\|\x_{\Omega\setminus S}\|_2^2\nonumber \\
=&(1+\alpha^2)\|\x_{\Omega\setminus S}\|_2^2\big[(1-\delta_{N(k+1)+|\Omega|-\ell})\nonumber \\
&-(1+\delta_{N(k+1)+|\Omega|-\ell})\alpha^2\big]\nonumber \\
=&(1+\alpha^2)\|\x_{\Omega\setminus S}\|_2^2\big[(1-\alpha^2)-\delta_{N(k+1)+|\Omega|-\ell}(1+\alpha^2)\big]\nonumber \\
=&(1-\alpha^4)\|\x_{\Omega\setminus S}\|_2^2\big(1-\frac{1+\alpha^2}{1-\alpha^2}\delta_{N(k+1)+|\Omega|-\ell}\big)\nonumber \\
\overset{(c)}{=}&(1-\alpha^4)\|\x_{\Omega\setminus S}\|_2^2\big(1-\sqrt{(|\Omega|-\ell)/N+1}\delta_{N(k+1)+|\Omega|-\ell}\big),
\end{align}
where (a) follows from \eqref{e:B} and Lemma \ref{l:orthogonalcomp} (note that $|\Omega\cap S|=\ell$, $|W|=N$ and $|S|=kN$, leading to
$|S\bigcup \left((\Omega\setminus S)\bigcup W\right)|=N(k+1)+|\Omega|-\ell$ ),
(b) follows from \eqref{e:uw} and \eqref{e:alphauw},
and (c) follows from the second equality in \eqref{e:alphaproperty}.

By \eqref{e:AB}, \eqref{e:transf11}, \eqref{e:transf12} and the fact that $1-\alpha^4>0$, we have
\begin{align*}
\,\;&\|\P^{\bot}_S\A_{\Omega\setminus S}\x_{\Omega\setminus S}\|_2^2\nonumber \\
&\quad-\frac{\sqrt{|\Omega|-\ell}\|\x_{\Omega\setminus S}\|_2}{N}\sum_{j\in W}|\A_{j}^T\P^{\bot}_S
\A_{\Omega\setminus S}\x_{\Omega\setminus S}|\nonumber \\
\geq&\|\x_{\Omega\setminus S}\|_2^2\big(1-\sqrt{(|\Omega|-\ell)/N+1}\delta_{N(k+1)+|\Omega|-\ell}\big).
\end{align*}
Thus, by \eqref{e:Ax}, we obtain
\begin{align*}
&\max_{i\in \Omega\setminus S}|\A_{i}^T\P^{\bot}_S\A_{\Omega\setminus S}\x_{\Omega\setminus S}|
-\frac{1}{N}\sum_{j\in W}|\A_{j}^T\P^{\bot}_S\A_{\Omega\setminus S}\x_{\Omega\setminus S}| \nonumber \\
\geq&\frac{\|\x_{\Omega\setminus S}\|_2\big(1-\sqrt{(|\Omega|-\ell)/N+1}\delta_{N(k+1)+|\Omega|-\ell}\big)}{\sqrt{|\Omega|-\ell} }
.
\end{align*}
Therefore, Lemma \ref{l:main} holds.
\ \ $\Box$

\section{Proof of Theorem \ref{t:kstep}}
\label{ss:kstep}



We prove the result by induction.
Suppose that gOMP selects at least one correct index in the first $k$ iterations,
then $\ell=|S_{k}\cap \Omega|\geq k$.
We assume $\Omega \not\subseteq S_{k}$ (i.e., $\ell\leq |\Omega|-1$) and Algorithm \ref{a:gOMP} performs at least $k+1$ iterations,
otherwise, the result holds.
Then, we need to show that $(S_{k+1}\setminus S_{k})\cap \Omega\neq \emptyset$.
Since $S_0=\emptyset$, the induction assumption $|\Omega|>|S_{k}\cap \Omega|\geq k$ holds with $k=0$.
Thus, the proof for the first iteration is contained in the case that $k=0$.

Let
 \beq
\label{e:W}
W= \{j_1,j_2,\ldots, j_N\} \subseteq\Omega^c
\eeq
such that
\begin{align}
\label{e:decreasing}
|\A_{j_1}^T\rr^{k}|&\geq\ldots\geq|\A_{j_N}^T\rr^{k}|\geq|\A_{j\in(\Omega^c\setminus W) }^T\rr^{k}|.
\end{align}
Then to show $(S_{k+1}\setminus S_{k})\cap \Omega\neq \emptyset$, we only need to show
\beqnn
\max_{i\in\Omega } |\A_i^T\rr^{k}|>|\A_{j_N}^T\rr^{k}|.
\eeqnn
By \eqref{e:decreasing},
\[
|\A_{j_N}^T\rr^{k}|\leq \frac{1}{N}\sum_{j\in W}|\A_{j}^T\rr^{k}|.
\]
Thus, to show $(S_{k+1}\setminus S_{k})\cap \Omega\neq \emptyset$, it suffices to show
\beq
\label{e:cond}
\max_{i\in\Omega } |\A_i^T\rr^{k}|>\frac{1}{N}\sum_{j\in W}|\A_{j}^T\rr^{k}|.
\eeq

By lines 4 and 5 of Algorithm \ref{a:gOMP}, we have
\begin{align}
\label{e:rk-1}
\rr^{k}&=\y-\A_{S_{k}}\hat{\x}_{S_{k}}
=\big(\I-\A_{S_{k}}(\A_{S_{k}}^T\A_{S_{k}})^{-1}\A_{S_{k}}^T\big)\y \nonumber \\
&\overset{(a)}{=}\P^{\perp}_{S_{k}}(\A\x+\v)
\overset{(b)}{=}\P^{\perp}_{S_{k}}(\A_{\Omega}\x_{\Omega}+\v) \nonumber \\
&=\P^{\bot}_{S_{k}}(\A_{\Omega\cap S_{k}}\x_{\Omega\cap S_{k}}+\A_{\Omega\setminus S_{k}}\x_{\Omega\setminus S_{k}}+\v)\nonumber\\
&\overset{(c)}{=}\P^{\perp}_{S_{k}}\A_{\Omega\setminus S_{k}}\x_{\Omega\setminus S_{k}}+\P^{\perp}_{S_{k}}\v,
\end{align}
where (a), (b) and (c) follow from the definition of $\P^{\perp}_{S_{k}}$,
the fact that $\Omega$ is the support of $\x$ and $\P^{\perp}_{S_{k}}\A_{S_{k}}=\0$, respectively.

By lines 3 and 4 of Algorithm \ref{a:gOMP}, for each $i\in S_{k}$,
$
|\A_i^T\rr^{k}|=0.
$
Thus, by \eqref{e:rk-1} and the triangular inequality, we have
\begin{align*}
\max_{i\in\Omega } |\A_i^T\rr^{k}|
\geq&
\max_{i \in \Omega\setminus S_{k}} (|\A_i^T\P^{\bot}_{S_{k}}\A_{\Omega\setminus S_{k}}\x_{\Omega\setminus S_{k}}|
-|\A_i^T\P^{\perp}_{S_{k}}\v|),\\
\frac{1}{N}\sum_{j\in W}|\A_{j}^T\rr^{k}|
\leq&
\frac{1}{N}\sum_{j\in W}|\A_j^T\P^{\bot}_{S_{k}}\A_{\Omega\setminus S_{k}}\x_{\Omega\setminus S_{k}}|\\
&+\max_{j\in W}|\A_j^T\P^{\perp}_{S_{k}}\v|.
\end{align*}
(Note that instead of lower bounding $\max_{i\in\Omega } |\A_i^T\rr^{k}|$ directly,
it was first lower bounded by $\frac{\|\A_{\Omega}^T\rr^{k}\|}{\sqrt {N}}$,
and then a lower bound on the latter quantity is given as a lower bound on $\max_{i\in\Omega } |\A_i^T\rr^{k}|$ in \cite[eq. (13)-(18)]{WanKS12}, this process requires $N\leq K$.)
Thus, to show \eqref{e:cond}, it suffices to show
\begin{align}
\label{e:condupdate}
\beta_1>\beta_2,
\end{align}
where
\begin{align}
\label{e:beta1}
\beta_1=&\max_{i \in \Omega\setminus S_{k}} |\A_i^T\P^{\bot}_{S_{k}}\A_{\Omega\setminus S_{k}}\x_{\Omega\setminus S_{k}}|\nonumber\\
&-\frac{1}{N}\sum_{j\in W}|\A_j^T\P^{\bot}_{S_{k}}\A_{\Omega\setminus S_{k}}\x_{\Omega\setminus S_{k}}|,\\
\label{e:beta2}
\beta_2=&\max_{i\in \Omega\setminus S_{k}} |\A_i^T\P^{\perp}_{S_{k}}\v|+\max_{j\in W} |\A_j^T\P^{\perp}_{S_{k}}\v|.
\end{align}

In the following, we apply the technique used in the proof of \cite[Theorem 1]{WenZWTM15} to give  an upper bound on $\beta_2$.
Clearly there exist $i_0\in\Omega\setminus S_{k}$ and $j_0\in W$ such that
\begin{align*}
\max_{i\in \Omega\setminus S_{k}} |\A_i^T\P^{\perp}_{S_{k}}\v|&=|\A^T_{i_0}\P^{\perp}_{S_{k}}\v|,\\
\max_{j\in W} |\A_j^T\P^{\perp}_{S_{k}}\v|&=|\A^T_{j_0}\P^{\perp}_{S_{k}}\v|.
\end{align*}
Therefore
\begin{align}
\label{e:beta2ub}
\beta_2=&\|\A^T_{i_0\cup j_0}\P^{\perp}_{S_{k}}\v\|_1
\overset{(a)}{\leq}\sqrt{2}\|\A^T_{i_0\cup j_0}\P^{\perp}_{S_{k}}\v\|_2\nonumber \\
\overset{(b)}{\leq}& \sqrt{2(1+\delta_{N(k+1)+|\Omega|-k})}\|\v\|_2,
\end{align}
where (a) is because $\A^T_{i_0\cup j_0}\P^{\perp}_{S_{k}}\v$ is a $2\times1$ vector,
(b) follows from Lemma \ref{l:AtRIP}, and
\beqnn
\|\P^{\perp}_{S_{k}}\v\|_2\leq\|\P^{\perp}_{S_{k}}\|_2\|\v\|_2\leq\|\v\|_2\leq\epsilon.
\eeqnn

In the following, we give  a lower bound on $\beta_1$.
By line 3 of Algorithm \ref{a:gOMP}, $|S_{k}|=kN$.
By the induction assumption,
\beq
\label{e:kl}
0\leq k\leq |\Omega\cap S_{k}|=\ell\leq |\Omega|-1.
\eeq
By \eqref{e:W}, $W\subset \Omega^c$ and $|W|=N$.
Thus, by Lemmas \ref{l:main} and \ref{l:monot}, and \eqref{e:beta1}, we obtain
\begin{align}
\label{e:betald1}
\beta_1\geq &\frac{(1-\sqrt {(|\Omega|-\ell)/N+1 }\delta_{N(k+1)+|\Omega|-\ell})\|\x_{\Omega\setminus S_{k}}\|_2}{\sqrt{|\Omega|-\ell}}\nonumber\\
\geq &\frac{(1-\sqrt {|\Omega|/N+1 }\delta_{N(k+1)+|\Omega|-k})\|\x_{\Omega\setminus S_{k}}\|_2}{\sqrt{|\Omega|-\ell}},
\end{align}
where the second inequality follows from \eqref{e:kl}, the fact that $k\leq k$ and Lemma \ref{l:monot}.

By \cite[eq.(21)]{Wan15}, we have
\beq
\label{e:betald2}
\|\x_{\Omega\setminus S_{k}}\|_2\geq\sqrt{\frac{|\Omega|-\ell}{K(1+\delta_{N(k+1)+|\Omega|-k})}}\sqrt{\mbox{MAR}\cdot\mbox{SNR}}\|\v\|_2.
\eeq
In fact, by the fact that $\ell=|\Omega\cap S_{k}|$, we have
\begin{align*}
\|\x_{\Omega\setminus S_{k}}\|_2&\geq\sqrt{|\Omega|-\ell}\min_{i\in\Omega }|x_i|\\
&\overset{(a)}{=}\sqrt{|\Omega|-\ell}\left(\sqrt{\mbox{MAR}}\|\x\|_2/\sqrt{K}\right)\\
&\overset{(b)}{\geq}\sqrt{\frac{|\Omega|-\ell}{K(1+\delta_{N(k+1)+|\Omega|-k})}}\sqrt{\mbox{MAR}}\|\A\x\|_2\\
&\overset{(c)}{\geq}\sqrt{\frac{|\Omega|-\ell}{K(1+\delta_{N(k+1)+|\Omega|-k})}}\sqrt{\mbox{MAR}\cdot\mbox{SNR}}\|\v\|_2,
\end{align*}
where (a) is from \eqref{e:SNR}, (b) is from
\begin{align*}
\|\A\x\|_2&=\|\A_{\Omega}\x_{\Omega}\|_2\leq \sqrt{1+\delta_{|\Omega|}}\|\x_{\Omega}\|_2\\
&\leq\sqrt{1+\delta_{N(k+1)+|\Omega|-k}}\|\x\|_2,
\end{align*}
and (c) follows from \eqref{e:SNR}.

By \eqref{e:betald1} and \eqref{e:betald2}, we have
\[
\beta_1\geq\frac{(1-\sqrt {|\Omega|/N+1 }\delta_{N(k+1)+|\Omega|-k})\sqrt{\mbox{MAR}\cdot\mbox{SNR}}\|\v\|_2}
{\sqrt{K(1+\delta_{N(k+1)+|\Omega|-k})}}.
\]
Thus, by \eqref{e:beta2ub}, \eqref{e:condupdate} can be guaranteed by
\begin{align*}
&\frac{(1-\sqrt {|\Omega|/N+1 }\delta_{N(k+1)+|\Omega|-k})\sqrt{\mbox{MAR}\cdot\mbox{SNR}}\|\v\|_2}
{\sqrt{K(1+\delta_{N(k+1)+|\Omega|-k})}}\\
>&\sqrt{2(1+\delta_{N(k+1)+|\Omega|-k})}\|\v\|_2,
\end{align*}
which is equivalent to \eqref{e:SNRld}.
By induction, the theorem holds.
\ \ $\Box$

\end{appendices}

\bibliographystyle{IEEEtran}
\bibliography{ref-RIP}

\end{document}